\newcommand{\boxfigurebegincmd}{\small}
\newenvironment{pagewideframe}
  {%
   \hbox\bgroup
     \newdimen{\pagewideframewidth}%
     \setlength{\pagewideframewidth}{\hsize}%
     \addtolength{\pagewideframewidth}{-2\fboxsep}%
     \addtolength{\pagewideframewidth}{-2\fboxrule}%
     \setbox0=\hbox\bgroup
       \begin{minipage}[t]{\pagewideframewidth}%
          \setlength{\parindent}{0pt}%
          \setlength{\parskip}{4pt plus 2pt minus 1pt}%
          \ignorespaces
  }{%
          % Without the following \hrule, if there are not any
          % paragraph lines (e.g., it is a single displayed equation),
          % then the minipage vbox will have the width of the widest
          % thing which may cause the boxed figure to look funny.  The
          % \hrule must be at the end rather than the beginning of the
          % minipage because otherwise extra \parskip glue will be
          % inserted before the first real paragraph.
          \hrule width \hsize height 0pt depth 0pt%
       \end{minipage}%
     \egroup
     %\showboxdepth=9999\relax
     %\showboxbreadth=9999\relax
     %\showbox0\relax
     \fbox{\unhbox0}%
   \egroup
  }
\newenvironment*{boxfigure}[3][htbp]%
  {%
    \begin{figure*}[#1]%
      \newcommand{\boxfigurelabel}{#2}%
      \newcommand{\boxfigurecaption}{#3}%
      \begin{pagewideframe}%
        \boxfigurebegincmd
  }{%
      \end{pagewideframe}%
      \caption{\boxfigurecaption}\label{\boxfigurelabel}%
    \end{figure*}%
  }%
\newcommand{\pagewideframehline}
  {\break
   \hbox
     {\setlength{\dimen0}{\hsize}%
      \addtolength{\dimen0}{2\fboxsep}%
      \kern-\fboxsep
      \vrule width \dimen0 height \fboxrule depth 0pt%
      \kern-\fboxsep}%
   \break}%
\newcommand{\safetext}[1]{{\ifmmode\mathchoice{\mbox{#1}}{\mbox{#1}}{\mbox{\scriptsize#1}}{\mbox{\scriptsize#1}}\else#1\fi}}
\newcommand{\upsfmdtext}[1]{\safetext{\sffamily\upshape\mdseries#1}}
\newcommand{\la}{\lambda}
\newcommand{\pbind}{\bind{\Pi}}
\newcommand{\rhoProj}[1]{\rho^{P_{..{#1}}}}
\newcommand{\odec}[2]{{#1}\mskip1.75mu{:}\mskip1.75mu{#2}}
\newcommand{\TE}{\msf{TE}}
\newcommand{\lapp}[2]{{{#1}\,{#2}}}
\newcommand{\lappThree}[3]{{{#1}\,{#2}\,{#3}}}
\newcommand{\lappFour}[4]{{{#1}\,{#2}\,{#3}\,{#4}}}
\newcommand{\lappFive}[5]{{{#1}\,{#2}\,{#3}\,{#4}\,{#5}}}
\newcommand{\mbin}{\mathbin{\in}}
\def\rhaak{\right)}
\def\lhaak{\left(}
\def\disp{\displaystyle}
\newcommand{\Dterms}{\Delta\mbox{-terms}}
\def\dom#1{\mathsf{vars}\lhaak #1\rhaak}
\newcommand{\subj}{\mathsf{var}}
\def\rrightarrow{\mathrel{\rightarrow \hspace{-.6em}\rightarrow}}
\def\mybigcapdot{\mathrel{\bigcap \hspace{-.7em}\cdot }}
\newcommand{\isnf}{\msf{isnf}}
\newcommand{\nf}{\msf{nf}}
\newcommand{\subst}[3]{{#1}[{#2}\mathbin{:=}{#3}]}
\newcommand{\ube}{\mathrel{\underline{\beta}}}
\newcommand{\fv}[1]{\mathsf{FV}({#1})}
\newcommand{\be}{\beta}
\newcommand{\n}{\upsfmdtext{f}}
\newcommand{\emptyContext}{\varepsilon}
\newcommand{\DeclarationSet}{\msf{Declaration}}
\newcommand{\RDeclarationSet}{\msf{RDeclaration}}
 \newcommand{\newdec}[3]{{{#1}\,{#2}\,{:}\,{#3}}}
 \newcommand{\rdec}[2]{{{#1}\,{#2}}}
 \newcommand{\mbto}{\mathbin{\to}}
 \newcommand{\binder}[2]{{#1}\mskip1.5mu{#2}.\,}
 \newcommand{\dsort}{\mathsf{sort}}
 \newcommand{\reqSort}{\mathsf{rsort}} % required sort
\newcommand{\typeAsSort}{\mathsf{tsort}}
 \newcommand{\rdecname}{\mathsf{rdec}}
 \newcommand{\newapp}[2]{{{#1}\,{#2}}}
 \newcommand{\app}[3]{{{#1}\,{#2}\,{#3}}}
\newcommand{\appFive}[5]{\app{\app{#1}{#2}{#3}}{#4}{#5}}
\newcommand{\judgeKR}[6]{{#3}\vdash^{#2}_{#1}{#4}\mathbin{:}{#5}\mathbin{:}{#6}}
\newcommand{\newjudgeKR}[5]{{#3}\vdash^{#2}_{#1}{#4}\mathbin{:}{#5}}
 \DeclareMathAlphabet{\mathbfit}{OML}{cmm}{b}{it}
\def\Rules{\ensuremath{\mathbfit{R}}}
 \newcommand{\bind}[3]{{\binder{#1}{#2}{#3}}}
 \newcommand{\dottedM}{\mycrown{M}}
 \newcommand{\lbind}{\bind{\lambda}}
\newcommand{\ion}{\mathord{\overline{\in}}}
\newcommand{\oin}{\mathord{\overline{\in}}}
\newcommand{\boxedwhat}{\fbox{???}}
\newcommand{\dottedU}{\mycrown{U}}
\newcommand{\TermSet}{\msf{Term}}
\newcommand{\FreeTermSet}{\msf{LTerm}}
\newcommand{\ContextSet}{\msf{Context}}
\newcommand{\RContextSet}{\msf{RContext}}
\newcommand{\msf}{\mathsf}
\newcommand{\mycrown}[1]{\dot{#1}}
\begin{document}
\title{Intersection Types via Finite-Set
   Declarations}
\author{Fairouz Kamareddine \and Joe Wells}

\authorrunning{Fairouz Kamareddine and Joe Wells}
\institute{Heriot-Watt University, Edinburgh, UK\\
%\email{\{f.d.kamareddine,joe.wells\}@hw.ac.uk}}
}
\maketitle              % typeset the header of the contribution
\begin{abstract}
The $\la$-cube is a famous pure type system (PTS) cube of eight powerful
explicit type systems that include the simple, polymorphic and depen-
dent type theories. The $\la$-cube only types Strongly Normalising (SN) terms but not all of them.
It is well known that even the most powerful system of the $\la$-cube can
only type the same pure untyped $\la$-terms that are typable by the higher-order polymorphic implicitly typed $\la$-calculus $F_\omega$, and that there is an
untyped $\la$-term $\dottedU$
that is SN but is not typable in $F_\omega$ or the $\la$-cube.
Hence, neither system can type all the SN terms it expresses.
In this paper, we present the $\n$-cube, an extension of the $\la$-cube with finite-set declarations (FSDs) like $y \ion \{C_1,\cdots, C_n\} :B$ which means that $y$ is of type $B$ and
can only be one of $C_1, \cdots, C_n$. The novelty of our FSDs is that they allow to represent intersection types as $\Pi$-types. We show how
to translate and type the term $\dottedU$
in the f-cube using an encoding of intersection types based on FSDs. Notably, our translation works without
needing anything like the usual troublesome intersection-introduction
rule that proves a pure untyped $\la$-term $M$ has an intersection type $\Phi_1\cap \cdots\cap \Phi_k$ using  $k$ independent subderivations. As such, our approach
is useful for language implementers who want the power of intersection
types without the pain of the intersection-introduction rule.
\keywords{Intersection Types,  Typability, Strong Normalisation.}
\end{abstract}
\section{The Troublesome Intersection-Introduction Rule}
  Type theory was first developed by Bertrand Russell to avoid
the contradictions in Frege's work. Since, type theory was adapted and
used by Ramsey, Hilbert/Ackermann and Church and later exploded
into powerful exciting formalisms that played a substantial role in the
development of programming languages and theorem provers, and in the
verification of software. As advocated by Russell, type theory remains to
this day a powerful tool at avoiding loops/contradictions and at characterising strong normalization (SN). There are 2 styles of typing: {\em explicit}
(\`a la Church) as in $\la x : T.x$ and {\em implicit} (\`a la Curry) as in $\la x:x$. We
call the latter {\em untyped}. A type assignment engine needs to work harder
to assign a type to an untyped term than to an explicitly typed one.

Intersection types were independently invented near the end of the 1970s by Coppo and Dezani \cite{Cop+Dez-Cia:NDJFL-1980-v21n4} and
Pottinger \cite{Pottinger:HBC-1980} with aims such as the analysis of normalization properties, which requires a very precise
analysis (see also~\cite{DBLP:journals/tcs/BonoVB08,DBLP:journals/iandc/LiquoriR07,DBLP:journals/entcs/Rocca02}).
Aiming to make use of this precision, the members of the Church Project worked on a compiler that used intersection types not only to
support the usual kind of type polymorphism but also to represent a precise polyvariant flow analysis that was used to enable optimizing
representation transformations~\cite{Wel+Dim+Mul+Tur:JFP-2002-v12n3-no-note}.
Among the many challenges that were faced, a major difficulty was the intersection-introduction typing rule, which made it complicated to do
local optimizations (an essential task for a compiler) while at the same time retaining type information and the ability to verify it.
The intersection-introduction rule usually looks like this:
\[\frac{E \vdash M:\sigma \hspace{0.5in} E \vdash M:\tau}{E\vdash M:\sigma\cap\tau} \hfill \quad{(\mbox{$\cap$-Intro})}\]

The proof terms are the same for both premises and the conclusion!
No syntax is introduced.
A system with this rule does not fit into the proofs-as-terms (PAT,
a.k.a.\ propositions-as-types and Curry/Howard) correspondence,
because it has proof terms that do not encode deductions.
This trouble is related to the fact that the $\cap$ type constructor is not a truth-functional propositional connective, but rather one that
depends on the proofs of the propositions it connects sharing some specific key structural details but not all
details~\cite{Venneri:JLC-1994-v4n2}.

There is an immediate puzzle in how to make a type-annotated variant of the system.
The usual strategy fails immediately, e.g.:
\[\frac{E \vdash (\la x:\sigma.x):\sigma\rightarrow \sigma \hspace{0.5in} E \vdash  (\la x:\tau.x):\tau\rightarrow \tau }{E\vdash  (\la x:\boxedwhat.x):(\sigma\rightarrow \sigma)\cap (\tau\rightarrow\tau)}\]
Where \fbox{???} appears, what should be written?  A compiler using intersection types must have some way of organizing the type
information of the separate typing subderivations for the same program points, because a transformation at a program point must
simultaneously deal with all of the separate subderivations.  It would be nice if this was principled rather than \emph{ad hoc}.

The various solutions to this problem each have their own strengths and weaknesses.
The most basic strategy is to accept the usual style of ($\cap$-\mbox{Intro}), and not try to have proof terms that contain type
annotations \cite{Dunfield:JFP-2014-v24,Oli+Shi+Alp:ICFP-2016}.%
This is fine if the plan is to discard much or most type information early in compilation, but is unhelpful
if checkable type-correctness is to be maintained through program transformations.

Another strategy is to have proof terms whose structure makes multiple copies of
subprograms typed with the ($\cap$-\mbox{Intro}) rule~\cite{Wel+Dim+Mul+Tur:JFP-2002-v12n3-no-note}.
This means that proof terms can not be merely annotated versions of the $\la$-terms being typed, because the branching structure of the proof
terms can not be the same as that of the $\la$-terms.
Our experience is that this makes local program transformations complicated and awkward if checkable type-correctness is to be maintained.
%The $\ISeq$ system we define in section \ref{intersec} is an example of this approach (note that the intersection-introduction rule of $\ISeq$ is merged with its application typing rule).
Another option is to limit the possible typings and the set of typable terms, and accept not having the full power of intersection
types \cite{Com+Pie:MSCS-1996-v6n5}.
However, this causes difficulty when the purpose of using intersection types is to support arbitrarily precise program analyses, and also when program transformations go outside of the restricted set of typings that are supported.

These issues led us to search for ways to get the power of intersection types without the usual multiple-premise-style ($\cap$-\mbox{Intro}) rule (\cite{Wel+Haa:ESOP-2002-no-note} gives an overview of earlier solutions). Some solutions do manage to merge the premises of the ($\cap$-\mbox{Intro}) rule into just one premise, but do not provide proof terms containing type information for variable bindings or any other easy-to-manipulate representation of typing derivations \cite{Ron+Rov:CSL-2001,Cap+Lor+Ven:BOTH-2001}.

\section{Finite-Set Declarations and Encoding $\cap$-Types}

Pure type systems (PTSs) were independently given in~\cite{Berardi:PhD:1990} and~\cite{Terlouw:GSTT:1989}
and have been used to reason simultaneously about families of type systems and logics.
The well known $\lambda$-cube of 8 specific PTS's \cite{Barendregt:HLCS-1992} captured the
core essence of polymorphic, dependent and 
Calculus of Constructions (CoC) systems.  
PTSs were extended with
{\em definitions}~\cite{Blo+Kam+Ned:IAC-1996-v126n2,DBLP:conf/lfcs/SeveriP94} in order to better represent mathematics and computation.
In addition to old $x:A$ declarations, 
these definitions allow declarations of the form $x=_d D:A$ which declare $x$ to be $A$ and to have type $B$.
Of course extra typing rules are added to type
the new terms with definitions.

While analysing the troublesome ($\cap$-\mbox{Intro}) rule, we noted
that definitions can be generalised to represent intersection types. This article
is the result of this observation. We present the new syntax which extends the $\la$-cube with finite set declarations that generalise definitions to support intersection types without the troublesome ($\cap$-\mbox{Intro}) rule.
Instead of definitions  $\la x=_d D:A. B$, the new syntax adds finite-set declarations (FSDs) $\la x\ion \{D_1, D_2, \cdots, D_n\} :A. B$ where $n\geq 1$.  This latter term is a function like $\la x : A.B$ 
that also requires its argument $x$ to exhibit at most the behaviors of $D_1, \cdots, D_n$.

We show that FSDs give the power of intersection types by translating an
intersection type $\Phi_1\cap\cdots\cap\Phi_k$ to a $\Pi$-term of the form:
$$\Pi z \ion \{P_{1,k},  \cdots, P_{k,k}\} :\ast_k. z\Phi_1\cdots\Phi_k$$
  where $\ast_k$ abbreviates $\underbrace{\ast\rightarrow\cdots\rightarrow\ast\rightarrow\ast}_{k \: \tiny{\mbox{arrows}}}$ and $P_{i,k} = (\la x_1:\ast.\cdots .\la x_k:\ast.x_i)$ picks the $i$-th of $k$ arguments. So, if $z = P_{i,k}$, then $z\Phi_1\cdots\Phi_k = \Phi_i$. Notably, our translation from intersection types works without anything in the translation result
  like the usual ($\cap$-\mbox{Intro}) rule that proves a pure untyped $\la$-term $\dottedM$ has an intersection type
  $\Phi_1\cap\cdots\cap\Phi_k$ using $k$ independent subderivations. In
PTS style, these ``subderivations'' are done simultaneously because in the scope
of the declaration $z \ion \{P_{1,k},  \cdots, P_{k,k}\} :\ast_k$, the type $z(A_1\rightarrow B_1)\cdots (A_k\rightarrow B_k)$ can be converted to the equal type $(z A_1\cdots A_k) \rightarrow (z B_1\cdots B_k)$.

Existing methods for
getting rid of the ($\cap$-\mbox{Intro}) rule which support some kind of type
equivalence~\cite{Wel+Haa:ESOP-2002-no-note} differ from our method in that our type equivalence is a consequence of the FSDs restriction on $z$, which is added to a system supporting $\Pi$-types, including those usually referred to as ``higher-order polymorphic types''
and ``dependent types''. Thus, FSDs might be a good way to add the power
of intersection types to languages like Coq, Agda, and Idris.  By supporting the full power of intersection types, FSDs might also help represent results of arbitrarily precise program analyses (e.g., a polyvariant flow analysis) in language implementations that have $\Pi$-types in their internal representation.

FSDs can do more than support a translation of intersection types. FSDs can
represent a ``definition'' by a $\be$-redex where the abstraction has a FSD with only one term in the restriction. The point of using a definition like $(\la x\ion \{D\}:C.B)D$ instead of a $\be$-redex like $(\la x :C.B)D$ is that in the former, $x$ is $D$ can be used to
help justify type-correctness of $B$, which might otherwise require replacing many
instances of $x$ by $D$. Although using an FSD instead of a traditional definition
requires forming the abstraction $\la x\ion\{D\}:C.B$ and its type, and hence an additional type formation rule, it is worth noting that, adding support for FSDs to
suitable systems does not require huge changes. This is the case since although
not always prominently stated in theory papers, in practice, proof assistants
support definitions in the formal systems of their implementations.

Section~\ref{sec3} introduces into the $\la$-cube the new feature of finite-set declarations. Section~\ref{sec4} presents
examples that demonstrate how the new syntax can be used to simulate intersection types and shows how a term of Urzyczyn which is untypable in the $\la$-cube
can be typed in the \n-cube.

Let $\mathbb{N}_1 = \mathbb{N}\setminus\{0\}$ be the positive natural numbers. Given $i, j \in \mathbb{Z}$,
define $i .. j = \{k\in \mathbb{Z} \mid i\leq k\leq j\}$ and $[i .. j) = i .. (j-1)$.   Write $|S|$
  for the size of set $S$.
  
As usual, the composition $X\circ Y$ is $\{(y,x) \mid \exists z. (y, z) \in Y \mbox{ and } (z, x) \in X \}$. Given
$n \in \mathbb{N}_1$, let $X^n$ be such that $X^1 = X$ and $X^{i+1} = X \circ (X^i)$ for $i \in \mathbb{N}_1$.

\section{Extending the Syntax of the $\la$-cube}
\label{sec3}
The new syntax has declarations of the form $x\rho:A$, meaning that the variable
$x$ has type $A$ and $x$ also obeys restriction $\rho= \ion\{A_1,\cdots, A_k\}$ for $k\in \mathbb{N}$. When
$k = 0$, then $x\ion\{\} : A$
%then $\rho = \ion\{\}$ is the null restriction $\diamond$ and we get
is the usual (unrestricted)
declaration $x:A$ of the $\la$-cube. When $k \in \mathbb{N}_1$, we get the new restricted finite-set declaration (FSD) $x \rho :B$ which only permits $x$ to be one of the $A_i$'s. These FSDs are the innovation of the \n-cube.

\begin{definition}[Syntax]
  %\label{def:bind-subst}
  \label{def:synt}
  Figures~\ref{fig:syntax} and~\ref{fig:syntaxintertype} use the usual pseudo-grammar notation to define the sets of syntactic entities. Many of these  entities  are clear from the usual type-free/typed $\la$-calculus. In the $\la$-cube we have the sorts $\ast$ and $\Box$, the rules
$\Rules\in {\msf{RuleSet}}$, and the declarations $\delta$ and contexts $\Delta$.  In the extended cube
however, the declarations are a generalised version of those of the $\la$-cube and
a declaration $\delta$ may not only be of the the usual form $x : A$ (which we also
write as $x \diamond : A$ and states that $x$ is of type $A$) but may also be of the form
$x\ion \{A_1, \cdots,A_i\}:A$ which states that $x$ is declared as any of the $A_j$s (for $j\in 1..i$) and is of type $A$.  We call declarations of the form $x\ion\{A_1, \cdots, A_i\}:A$, where $i\not = 0$ restricted declarations and these belong to $\RDeclarationSet$.

Expressions of the $\n$-cube
  and the pure $\la$-calculus  are given respectively by $\TermSet$ and $\FreeTermSet$.
  Figure~\ref{fig:syntax} defines default set ranges for metavariables.
 $\msf{Variable}$ is used as the set of what we call names. There are two name classes:
$\msf{Variable}^\ast$ marked with sort $\ast$ and $\msf{Variable}^\Box$ marked with sort $\Box$. For embedding
  reasons, it is usual to take the type free $\la$-calculus variables $\msf{LVariable}$ to be  $\msf{Variable}^\ast$. So, $x^\ast$ and $\mycrown{x}$ range over  $\msf{Variable}^\ast = \msf{LVariable}$ whereas $x^\Box$
ranges over $\msf{Variable}^\Box$. If no confusion occurs, we simply write $x$ to range
over $\msf{Variable} = \msf{Variable}^\ast\cup \msf{Variable}^\Box$.
\vspace{-0.1in}
\begin{boxfigure}{fig:syntax}{Metavariable declarations and type free terms.}
  \noindent
  $$
  %  \begin{array}{@{}r@{\ \in\ }l@{\ \mathrel{=}\ }l@{}}
  \begin{array}{@{}r@{\ \in\ }l@{\ \mathrel{::=}\ }l@{}}
      a,b,\quad i,\ldots,n,\quad q,r                        & \multicolumn{1}{@{}l@{}}{\mathbb{N}}
      \\ \varsigma                            & \msf{Sort}          & \ast \mid \Box
\\ a^\varsigma,\ldots, z^\varsigma                        & \multicolumn{1}{@{}l@{}}{\msf{Variable}^{\varsigma}}
          \\ f,g,h,\quad q,\ldots,z            & \multicolumn{1}{@{}l@{\ =\ }}{\msf{Variable}}
                                                          & \msf{Variable}^{\ast}\cup \msf{Variable}^{\Box}
    \\ \mycrown{a}, \cdots, \mycrown h,\quad\mycrown{k}, \cdots, \mycrown{z}    & \multicolumn{1}{@{}l@{\ =\ }}{\msf{LVariable}}
                                                          & \msf{Variable}^\ast
    \\ \mycrown{A},\cdots,\mycrown{Z}     & \FreeTermSet        & \mycrown{x} \mid \lambda \mycrown{x}. \mycrown{M} \mid \mycrown{M}\mycrown{N}\quad\hfill\mbox{\textit{(type-free $\la$terms)}}
     \vspace{-0.1in}
    \end{array}
  $$
\end{boxfigure}
Let $\msf{Syntax}$ be the uniton of all the sets of syntactic entities that we define in Figures~\ref{fig:syntax} and~\ref{fig:syntaxintertype}.  
Let $\chi$ range over $\msf{Syntax}$. We assume the usual assumptions of binding, $\alpha$-conversion and the Barendregt variable renaming. We take $\alpha$-convertible expressions to denote the same syntactic entities, e.g., even if $\mycrown{x} \not = \mycrown{y}$, it nonetheless holds that $\la \mycrown{x}. \mycrown{x} = \la  \mycrown{y}. \mycrown{y}$.  As usual, let $\fv{{\chi}}$ be  the collection of all variables in ${\chi}$.  We say $\chi$ is closed iff $\fv{{\chi}}=\{\}$. We assume the usual substitution in the $\la$calculus where the capture of free variables must be avoided. 
\end{definition}
\vspace{-0.1in}
\begin{boxfigure}{fig:syntaxintertype}{$\la$- and $\n$-cube systems syntax definitions.}
  \noindent
  $$
    \begin{array}{@{}r@{\: \in \:}l@{\ \mathrel{::=}\ }l@{}}
        \pi                            & \msf{Binder}        & \lambda \mid \Pi
         \\ \rho                            & \msf{Restriction}   & \overline{\in} \{A_1, \dots,A_i\} \quad \mbox{where }i\in\mathbb{N} \quad% \mbox{ Write } \diamond = \ion\{\}
    \\ \delta                            & \DeclarationSet     & \newdec{x}{\rho}{A}
    \\ \Delta                            & \ContextSet         & \emptyContext \mid {\delta}_1,\ldots,{\delta}_i \quad \mbox{where }i\in\mathbb{N}_1
    \\ \gamma                            & \RDeclarationSet    & \rdec{x}{\rho} \quad\mbox{where $\rho \not = \diamond$}%\qquad\hfill\mbox{\textit{(\textup{\textsf{R}} means ``restriction-only'')}}
    \\ \Gamma                            & \RContextSet        & \emptyContext \mid {\gamma}_1,\ldots,{\gamma}_i \quad \mbox{where }i\in\mathbb{N}_1
    \\ A,\ldots,H,\quad J,\quad L,\ldots,W       & \TermSet            & \varsigma \mid x \mid \bind{\pi}{\delta}{A} \mid \newapp{A}{B}
       \\\multicolumn{1}{@{}l@{}}{}
                                    & \msf{Rule}          & (\varsigma,\varsigma') % (\ast,\ast)\ \mid\ (\ast,\Box)\ \mid\ (\Box,\ast)\ \mid\ (\Box,\Box)
       \\ \Rules                       & \multicolumn{1}{@{}l@{\ ::=\ }}{\msf{RuleSet}} & \{X\subseteq\msf{Rule} \mid (\ast,\ast)\in X\}
            \vspace{-0.1in}
    \end{array}
    $$
    \begin{itemize}
    \item
            Define the null restriction $\diamond= \ion\{\}$ and write
            $x\diamond:A$ as the usual declaration $x:A$.
               \item
      Define the restriction declarations of $\Delta$ by:  $\rdecname(\newdec{x}{\diamond}{A},\Delta) = \rdecname(\Delta);$\\$\rdecname(\varepsilon) = \varepsilon;\quad$ and 
      $\rdecname(\newdec{x}{\oin\{A_1,\dots,A_k\}}{A},\Delta) =  \rdec{x}{\oin\{A_1,\dots,A_k\}},\rdecname(\Delta)$.
       \item
            For the $\la$-cube, $\msf{Restriction} = \{\diamond\}$; $ \RDeclarationSet = \emptyset$ and $\rdecname(\Delta) = \varepsilon$.

    \item
      Define the variables of a $\delta$ or $\gamma$ by: $\dom{x\rho:A} =\dom{x\rho} = \{x\}$ and define
      %         For $\DeclarationSet$ and $\RDeclarationSet$ define:\\
      $\dom{\varepsilon}=\{\}$;
  $\hspace{.2in}$ $\dom{\delta,\Delta}={\subj(\delta)}\cup\dom{\Delta}$;$\quad$
  $\dom{\gamma,\Gamma}={\subj(\gamma)}\cup\dom{\Gamma}$.
    \item
      Define $\sharp(B)$, the degree of $B$ where  $\sharp \in \msf{Term}\mbto{0..3}$ by:\\ $\sharp(\Box) = 3$, $\sharp(\ast) = 2$, $\sharp(x^\varsigma) =\sharp(\varsigma)-2$, and
      $\sharp(\bind\pi\delta A) = \sharp(\newapp AB) = \sharp (A)$.
    \item
      Define $\reqSort$ and $\typeAsSort$       by:
      \begin{itemize}
        \item
          If $\sharp(B) = 0$ then $\reqSort(B) = \ast$ and if $\sharp(B) = 1$ then $\reqSort(B) = \Box$.
          \item
            If $\sharp(B) = 1$ then $\typeAsSort(B) = \ast$ and if $\sharp(B) = 2$ then $\typeAsSort(B) = \Box$.
             \end{itemize}
    \end{itemize}
    \vspace{-0.1in}
\end{boxfigure}
\vspace{-0.1in}
\begin{definition}[Rewriting]
  \label{def:rewriting}
  We use the usual notion of \emph{compatibility} of a relation on syntactic entities.
   Let $\ube$ and $\beta$ be the smallest compatible relations where:
  $$
    \begin{array}{@{}r@{\ \;}c@{\ \;}l@{}}
       (\lbind{x}{\mycrown{M}})\mycrown{N}           & \ube         & \mycrown{M}[x:=\mycrown{N}]
    \\[0.5ex]
  %      \app{(\bind{\la}{\dec{x}{\rho}{A}{\varsigma}}{B})}{\varsigma}{C} & \mathrel{\beta}  & B[x:=C]
      \newapp{(\bind{\la}{\newdec{x}{\rho}{A}}{B})}{C} & \mathrel{\beta}  & B[x:=C]
     % \typecomment{A}{B}{\varsigma}  \mathrel{\beta}  A
    \end{array}
  $$
    For $r\in\{\ube,\be\}$, let $\rightarrow_r$, $\rrightarrow_r$, and $=_r$ be defined as usual:
    $(\rightarrow_r)=(r)$; and
    $\rrightarrow_r$ is the smallest transitive relation containing $r$ that is reflexive on $\msf{Syntax}$; 
   % i.e.    ${\rrightarrow_r}\supseteq{(\fun{\chi}{\msf{Syntax}}{\chi})}$;
   and
  %\item
    $=_r$ is the smallest transitive symmetric relation containing $\rrightarrow_r$.
 \end{definition}
The following theorem  shows that our rewriting rules are confluent. % The proof is given on page~\pageref{prf:thm:confluence}
\begin{theorem}[Confluence for $\beta$]
  \label{thm:confluence}
  If $\chi_1\rrightarrow_{\be}\chi_2$ and $\chi_1\rrightarrow_{\be}\chi_3$ then there exists $\chi_4\in\msf{Syntax}$ such that
  $\chi_2\rrightarrow_{\be}\chi_4$ and $\chi_3\rrightarrow_{\be}\chi_4$.
 \end{theorem}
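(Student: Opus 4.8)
The plan is to use the Tait--Martin-L\"of method of parallel reduction, adapted to the extra syntactic structure carried by finite-set declarations. First I would define a parallel reduction relation $\Rightarrow$ on $\msf{Syntax}$ as the smallest relation that is reflexive on variables and sorts, that is compatible with every syntactic constructor (so that one may reduce simultaneously inside the binder annotation, inside each term $A_j$ of a restriction $\ion\{A_1,\ldots,A_i\}$, inside the body of a binder, and in both components of an application), and that contracts a top-level redex in parallel with its subterms: if $B\Rightarrow B'$ and $C\Rightarrow C'$ then $\newapp{(\bind{\la}{\newdec{x}{\rho}{A}}{B})}{C}\Rightarrow\subst{B'}{x}{C'}$. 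The virtue of this definition is that $\be$ touches neither the annotation $A$ nor the restriction $\rho$ when it fires, so those merely ride along as ordinary subterms that may be reduced independently; the declaration apparatus thus contributes extra congruence cases but no genuinely new interaction with the redex structure. Reflexivity of $\Rightarrow$ on all of $\msf{Syntax}$ then follows by a routine induction.

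Second, I would sandwich $\Rightarrow$ between one-step and many-step reduction, proving $(\rightarrow_{\be})\subseteq(\Rightarrow)\subseteq(\rrightarrow_{\be})$. The left inclusion holds because a single $\be$-step is a particular parallel step; the right inclusion is an induction on the derivation of $\Rightarrow$, using transitivity and compatibility of $\rrightarrow_{\be}$. Taking reflexive-transitive closures shows that $\rrightarrow_{\be}$ coincides with the reflexive-transitive closure of $\Rightarrow$, so it suffices to prove that $\Rightarrow$ is confluent; and since the reflexive-transitive closure of a relation with the diamond property is itself confluent, it is enough to establish the diamond property for $\Rightarrow$.

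Third, rather than a direct four-way case analysis I would obtain the diamond property through the complete-development (Takahashi) technique. I would define, by recursion on $\chi\in\msf{Syntax}$, a complete development $\chi^\ast$ that contracts in one parallel step every redex occurring in $\chi$ --- in particular $(\newapp{(\bind{\la}{\newdec{x}{\rho}{A}}{B})}{C})^\ast=\subst{B^\ast}{x}{C^\ast}$, with the restriction and annotation developed componentwise --- and then prove the triangle lemma: if $\chi\Rightarrow\chi'$ then $\chi'\Rightarrow\chi^\ast$. The diamond property is immediate, since from $\chi\Rightarrow\chi_1$ and $\chi\Rightarrow\chi_2$ the triangle lemma yields $\chi_1\Rightarrow\chi^\ast$ and $\chi_2\Rightarrow\chi^\ast$, so $\chi^\ast$ closes the diagram.

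The main technical obstacle, and the one place where the FSD syntax calls for genuine care, is the substitution lemma needed to close the redex case of the triangle lemma: if $M\Rightarrow M'$ and $N\Rightarrow N'$ then $\subst{M}{x}{N}\Rightarrow\subst{M'}{x}{N'}$. Establishing it requires the usual commutation of substitution with the constructors, extended so that substitution distributes over each term of a restriction and over the binder annotation, and the Barendregt convention assumed in Definition~\ref{def:synt} lets me ignore variable capture throughout. Once this lemma is available the remaining cases of the triangle lemma are routine structural inductions, with the binder and application cases standard and the restriction-carrying declaration contributing only bookkeeping.
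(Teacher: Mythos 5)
Your proof is correct, but it takes a genuinely different route from the paper. The paper's proof is a two-line appeal to rewriting metatheory: it translates $\msf{Syntax}$ and $\be$ into a higher-order rewriting framework (van Oostrom's), observes that the single rule $\newapp{(\bind{\la}{\newdec{x}{\rho}{A}}{B})}{C}\mathrel{\be}\subst{B}{x}{C}$ is \emph{orthogonal} (left-linear, no critical pairs), and invokes the standard result that orthogonal higher-order systems are confluent. You instead carry out a direct Tait--Martin-L\"of argument with Takahashi's complete developments: parallel reduction $\Rightarrow$, the sandwich $(\rightarrow_{\be})\subseteq(\Rightarrow)\subseteq(\rrightarrow_{\be})$, the substitution lemma, and the triangle lemma. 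Both arguments ultimately rest on the same structural fact --- the annotation $A$ and the restriction $\rho$ are simply discarded when the redex fires, so FSDs contribute congruence cases but no new redex interactions --- which is exactly what makes the system orthogonal. The paper's route buys brevity at the cost of an external embedding and a cited metatheorem; yours buys a self-contained, elementary, checkable proof at the cost of spelling out the parallel-reduction machinery over all the syntactic categories (declarations, restrictions, contexts), which you correctly identify as mere bookkeeping. Your observation that $\Rightarrow$ need not reduce $A$ and $\rho$ in the contraction case (they ``ride along'') is the right way to handle the only nonstandard feature of the syntax.
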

 \begin{proof}%[{\bf Proof of Confluence (theorem~\ref{thm:confluence})}]
\label{prf:thm:confluence}
  First, translate $\msf{Syntax}$ and $\beta$ into a higher-order rewriting (HOR) framework, e.g.,  van Oostrom's
  framework~\cite{vanOostrom:CAHOR-1994}.
  It is then straightforward to show that $\beta$ is \emph{orthogonal}.
  It follows by a standard HOR result that $\beta$ is \emph{confluent}.
  \hfill $\boxtimes$
\end{proof}

\begin{definition}[Normal Forms]
  \label{def:nf-sn}
  A syntactic entity $\chi$ \emph{is a normal form}, written $\isnf(\chi)$, iff there is no $\chi'$ such that $\chi\rightarrow_r\chi'$ for
  $r\in\{\ube,\be\}$.
  The \emph{normal form of} $\chi$, written $\nf(\chi)$, is the unique syntactic entity $\chi'$ such that $\chi\rrightarrow_r\chi'$ for
  $r\in\{\ube,\be\}$ and $\isnf(\chi')$.
  (Note that $\nf(\chi)$ might be undefined, e.g., consider $\chi=\newapp{B}{B}\in\TermSet$ where
  $B=\lbind{\newdec{x}{\diamond}{y}}{\newapp{x}{x}}$, has no normal form (and is also not type-correct).)
    A syntactic entity $\chi$ is strongly normalizing, written $\msf{SN}(\chi)$, iff there is no infinite $r$-rewriting sequence starting at
  $\chi$ for $r\in\{\ube,\be\}$.
\end{definition}

Each of the $\la$-cube and the $\n$-cube has  8 type systems each defined by a set $\Rules$ which contains type formation rules which the $(\Pi)$ and $(\la)$ rules use to regulate the allowed abstractions. Figure~\ref{fig_par_bcube} gives the 8 systems defined by these $\Rules$s. E.g., $\widehat{\lambda C}$ uses all combinations $(\varsigma,\varsigma')$ where $\varsigma\in\{\ast,\Box\}$.
{\footnotesize
\begin{figure}
\begin{center}
\begin{tabular}{|l|l|c|c|c|c|}
%\hline
%System & & & & \\
\hline
$\widehat{{\lambda}{\rightarrow}}$
                 &$(\ast,\ast)$\hspace{1in}&&&\\
$\widehat{\lambda 2}$
              &$(\ast,\ast)$ &$(\Box,\ast)$\hspace{0.5in}&&\\
$\widehat{\lambda P}$&$(\ast,\ast)$&&$(\ast,\Box)$\hspace{0.5in}&\\
$\widehat{\lambda P2}$&$(\ast,\ast)$&$(\Box,\ast)$&$(\ast,\Box)$&\\
$\widehat{\lambda\underline{\omega}}$&$(\ast,\ast)$& & & $(\Box,\Box)$\\
$\widehat{\lambda\omega}$&$(\ast,\ast)$&$(\Box,\ast)$&&$(\Box,\Box)$\\
$\widehat{\lambda\mbox{P}}\underline{\omega}$   &$(\ast,\ast)$&&$(\ast,\Box)$&
$   (\Box,\Box)$\\
$\widehat{\lambda C}$ &$(\ast,\ast)$&$(\Box,\ast)$&$(\ast,\Box)$&
    $(\Box,\Box)$\\
\hline
\end{tabular}
\setlength{\unitlength}{.15mm}
\begin{picture}(210,80)
\thicklines
\put(0,-40){\line(1,0){120}}
\put(120,-40){\line(0,1){120}}
\put(120,80){\line(-1,0){120}}
\put(0,80){\line(0,-1){120}}
\put(120,-40){\line(3,1){90}}
\put(120,80){\line(3,1){90}}
\put(0,80){\line(3,1){90}}
\put(90,110){\line(1,0){120}}
\put(210,110){\line(0,-1){120}}
\put(90,-10){\line(0,1){85}}
\put(90,85){\line(0,1){25}}
\put(90,-10){\line(1,0){25}}
\put(125,-10){\line(1,0){85}}
\put(0,-40){\line(3,1){90}}
\put(1,-29){\makebox(0,0)[bl]{$\widehat{\lambda{\rightarrow}}$}}
\put(119,-39){\makebox(0,0)[br]{$\widehat{\lambda P}$}}
\put(1,109){\makebox(0,0)[tl]{$\widehat{\lambda 2}$}}
\put(121,79){\makebox(0,0)[tl]{$\widehat{\lambda P2}$}}
\put(85,-11){\makebox(0,0)[br]{$\widehat{\lambda\underline{\omega}}$}}
\put(207,-7){\makebox(0,0)[br]{$\widehat{\lambda P}\underline{\omega}$}}
\put(238,129){\makebox(0,0)[tr]{$\widehat{\lambda C}$}}
\put(91,136){\makebox(0,0)[tl]{$\widehat{\lambda\omega}$}}
\put(0,-40){\circle*{3}}
\put(120,-40){\circle*{3}}
\put(120,80){\circle*{3}}
\put(0,80){\circle*{3}}
\put(90,-10){\circle*{3}}
\put(210,-10){\circle*{3}}
\put(210,110){\circle*{3}}
\put(90,110){\circle*{3}}
\end{picture}
\hspace*{1mm}
\begin{picture}(100,60)
\put(-40,-100){\vector(1,0){60}}
\put(-40,-100){\vector(0,1){60}}
\put(-40,-100){\vector(3,1){44}}
\put(22,-100){\makebox(0,0)[l]{$(\ast,\Box)$}}
\put(-4,-65){\makebox(0,0)[l]{$(\Box,\Box)$}}
\put(-38,-40){\makebox(0,0)[l]{$(\Box,\ast)$}}
\end{picture}
\end{center}
\vspace{-0.1in}
\caption{The rule sets for the $\la$-cube and $\n$-cube}
\label{mytable}
\label{fig_par_bcube}
\end{figure}
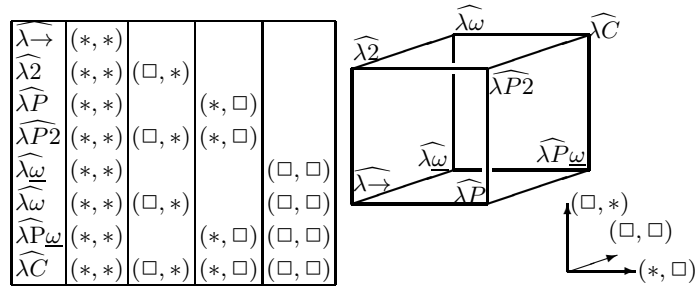
}

\vspace{-0.1in}
\begin{boxfigure}{fig:restrsat}{Restriction Satisfaction/Utilisation Judgements ($\Gamma \Vdash B\rho$)}
  \noindent
  $$ \mbox{(ref)} \ %
  \frac{i\in 1..n; \hspace{0.5in} B=_\be A_i}{\varepsilon\Vdash B\ion \{A_1, \cdots, A_n\}}
  $$
  
  $$ \mbox{(ctR)}  \ %
        \frac
          {\disp
             x \not \in \dom{\Gamma}
           ;\quad
             \binder\forall{i\mbin 1..n}\Gamma[x:=A_i] \Vdash  B[x:=A_i]\,\rho[x:=A_i]}
          {\disp \rdec{x}{\ion\{A_1, \dots, A_n\}}, \Gamma \Vdash  B\,\rho}
          $$
          \vspace{-0.1in}
  \end{boxfigure}

\begin{definition}[Typing Rules and Judgements and Type Systems]%
  \label{defsystems}
  The typing rules for the $\la$- and $\n$-cubes are given in figure~\ref{newsimplifiedtypes}.  If $\newjudgeKR{}{\Rules}{\Delta}{A}{B}$ then:
  \begin{itemize}
  \item
     In type system $\lambda^{\Rules}$ and in context $\Delta$ the term $A$ has \emph{type} $B$.
  \item
       $\Delta$, $A$, and $B$ are \emph{$\lambda^{\Rules}$-legal} (or simply legal) and $A$ and $B$ are \emph{$\Dterms$}.
  \item
    $A$ has sort $\varsigma$ if also $\newjudgeKR{}{\Rules}{\Delta}{B}{\varsigma}$ holds
    (in this case, note that  $\dsort(A)=\varsigma$).
  \end{itemize}
  Let $\judgeKR{}{\Rules}{\Delta}{A}{B}{C}$ stand for $\newjudgeKR{}{\Rules}{\Delta}{A}{B}$ and
  $\newjudgeKR{}{\Rules}{\Delta}{B}{C}$. If  $\Rules$ is omitted from $\vdash^{\Rules}$, then the reader should infer it.
  
  As we see in  Figure~\ref{newsimplifiedtypes}, the $\la$- and $\n$-cubes only differ in rules \mbox{(start)}, \mbox{(app)} and \mbox{(conv)}.  For the $\n$-cube, \mbox{(start)} is an obvious generalisation of that of the $\la$-cube (checking that the $B_j$s have the correct type), whereas
  \mbox{(app)} and \mbox{(conv)} use Figure~\ref{fig:restrsat} to check that only well behaved restrictions are used. Here, $\rdecname(\Delta)\Vdash  A\,\rho$ and $\rdecname(\Delta)\Vdash  B\ion\{C\}$ ensure that the restrictions via FSDs are activated according to Figure~\ref{fig:restrsat} so that if $\rho=\ion\{C_1, \cdots,C_n\}$, then for all $i$,  $A=_\be C_i$ modulo substitutions based on FSDs in $\rdecname(\Delta)$.  Thus, if there are no FSDs, i.e.\ $\rdecname(\Delta)=\varepsilon$, then the  $\rdecname(\Delta)\Vdash  B\ion\{C\}$ of \mbox{(conv)} becomes the $B=_\be C$ in the $\la$-cube.\footnote{The \mbox{(weak)} rule differs slighly  from that of the $\la$-cube, but a simple check shows that this formalisation of the $\la$-cube is equivalent to that of~\cite{Barendregt:HLCS-1992}.}
\end{definition}
\vspace{-0.1in}
 \begin{boxfigure}{newsimplifiedtypes}{Typing rules of the $\la$- and $\n$-cubes.}%
  $$
    \begin{array}{@{}c@{}}
        \mbox{(axiom)}
      \ %
        \emptyContext\vdash^{\Rules} \ast:\Box
    \quad\qquad
        \mbox{(weak)}
      \ %
          \disp
          \frac{\disp
                  \Delta,\delta\vdash^{\Rules} A:B
                ;\qquad
                  \Delta\vdash^{\Rules} C : D}
               {\disp\Delta,\delta\vdash^{\Rules} C : D}
               \\[3ex]
                   \mbox{(start)}
      \ %
          \disp
          \frac
            {
              \disp
              \left(
              \begin{array}{@{}l@{}}
                                 x^\varsigma\not\in\dom{\Delta}
                ;\qquad
                  \Delta\vdash^{\Rules} A : \varsigma 
                ;
              \\ \mbox{if $\rho=\ion\{B_1,\cdots,B_n\}$ then }
                \binder\forall{j\in 1..n}
                          \Delta\vdash^{\Rules} B_j: A
                  \end{array}
                        \right)                 }
            {\disp\Delta,\newdec{x^\varsigma}{\rho}{A} \vdash^{\Rules} x^\varsigma:A}
            \\[3ex]
                     \mbox{($\Pi$)}
      \ %
        \disp
        \frac
          {\disp
             \Delta,\newdec x\rho A\vdash^{\Rules}B:\varsigma
           ;\qquad
             \Delta\vdash^{\Rules}A:\varsigma'
           ;\qquad
             (\varsigma',\varsigma)\in\Rules}
          {\disp
           \Delta\vdash^{\Rules}\bind{\Pi}{\newdec x\rho A}{B}:\varsigma}
          \\[3ex]
            \mbox{($\la$)}
      \ %
        \disp
        \frac{\disp\Delta,\delta\vdash^{\Rules} B':B
         ;\qquad
        \Delta \vdash^{\Rules} \bind{\Pi}{\delta}{B}:\varsigma
             }
             {\disp\Delta\vdash^{\Rules} \bind{\la}{\delta}{B'}:\bind{\Pi}{\delta}{B}}
         \\[3ex]
           \mbox{(app)}
      \ %
          \disp
          \frac
            {\disp
               \Delta\vdash^{\Rules} F:(\bind{\Pi}{\newdec{x}{\rho}{C}}{B})
             ;\qquad
               \Delta\vdash^{\Rules} A:C
             ;\qquad
               (
                  \mbox{ if } \rho\not =\diamond
                                      \mbox{ then }
                   \rdecname(\Delta)\Vdash  A\,\rho
               )
            }
            {\disp\Delta\vdash^{\Rules} \newapp{F}{A}:B[x:=A]}
    \\[3ex]
       \mbox{(conv)}
      \ %
          \disp
          \frac{\disp
                  \Delta\vdash^{\Rules}  A:B
                ;\qquad
                  \Delta\vdash^{\Rules} C:\varsigma
                ;\qquad
                  \rdecname(\Delta) \Vdash B\ion\{C\}}
               {\disp\Delta\vdash^{\Rules} A:C}
               \vspace{-0.1in}
   
    \end{array}
    $$
    \begin{itemize}
    \item
      In the $\la$-cube, the if-then statements of \mbox{(start)} and \mbox{(app)} do not apply, and the $\rdecname(\Delta) \Vdash B\ion\{C\}$ of \mbox{(conv)} becomes $b=_\be C$ by Figure~\ref{fig:restrsat}.
    \item
      We write $\Delta\vdash^{\Rules}_{\la} A:B$ resp.\ $\Delta\vdash^{\Rules}_{\n} A:B$ for type derivation in the $\la$- resp.\ $\n$-cube.
    \end{itemize}
    \vspace{-0.1in}
\end{boxfigure}

 The next definition gives the function $\TE$  which erases types and all information at degree 1 or more from elements of $\TermSet$ to return pure untyped $\la$-terms.
  $\TE$ is like the function E of~\cite{GHRDR1993}, but is simpler because we only need $\TE(A)$
  to be meaningful when $\Delta \vdash_\n A:B:\ast$.
       If $A$ is not legal or the sort (type of the type) of $A$ is not $\ast$, then we do not care whether $\TE(A)$ is defined or if so what it
    is.

\begin{definition}[Type Erasure]
  \label{def:te}
  Let $\TE\in \TermSet\rightarrow\FreeTermSet$ be the smallest function where:
       $ \TE(x)                                   = x$ and \\
  $
    \begin{array}{@{}l@{\ =\ }l@{\qquad\qquad}l@{\ =\ }l@{}}
   %     \TE(x)                                   & x
   % \\
        \TE(\newapp{A}{B})                       & \lapp{\TE(A)}{\TE(B)} \quad\mbox{if $\sharp(B)=0$}
      &
       \TE(\lbind{\newdec{x^\ast}{\rho}{A}}{B}) & \lbind{x^\ast}{\TE(B)}
        \\
        \TE(\newapp{A}{B})                       & \TE(A)                \quad \quad \quad \quad\mbox{if $\sharp(B)=1$}
      &
   \TE(\lbind{\newdec{x^\Box}{\rho}{A}}{B}) &  \quad \quad\TE(B)
    \end{array}
    $
    \end{definition}
      Definition~\ref{defsystems} stated how we can type explicily typed terms (those of $\TermSet$).  The next definition states how to type pure type-free terms (those of 
$\FreeTermSet$).
\begin{definition}[Typability of Pure $\la$-Terms]
  \label{typabilitydef}
  A pure $\la$-term $\mycrown{M}$ is \emph{typable} iff there exist $\Delta$, $A$, and $B$ such that $\Delta \vdash_{\n} A:B:\ast$ and $\TE(A) = \mycrown{M}$.
\end{definition}

\section{Implementing Intersection Types}
    \label{sec4}
This section defines intersection types using FSDs and shows that Urzyczyn's famous term is typable in the $\n$-cube.  Throughout,  assume we are using one of the 2 systems that allow forming functions ``from types to types'' (``type constructors'') and ``from types to terms'' (``type polymorphism'').  Thus, prefix every statement with ``If $\{(\Box,\Box), (\Box,\ast)\}\subseteq \Rules$, then  $\cdots$'' and read every  ``$\vdash$'' as ``$\vdash^\Rules_\n$''.

The next definitions give pieces of syntax and abbreviations that are needed to define intersection types in the $\n$-cubes. 
\begin{definition}[General Syntax Abbreviations]
  \vspace{-0.1in}
     \begin{itemize}
        \item
    $A \mbto B
        = \bind{\Pi}{\odec{w^{\varsigma}}{A}}{B}\mbox{ where }w^\varsigma\not\in\fv{B}\mbox{ and }\varsigma=\typeAsSort(A)$.\\
         $\mbto$ associates to the right, that is, $A\mbto B\mbto C$ stands for $A\mbto(B\mbto C)$.
\item
$ \ast_0 = \ast
     \quad$ and 
       $\quad \ast_{i+1} = \ast \mbto \ast_i$.
     \item
  Given names for declarations $\delta^{v_1}$, $\ldots$, $\delta^{v_n}$, define:
  $
    \Delta^{v_1,\ldots,v_n} = \delta^{v_1},\ldots,\delta^{v_n}
     $.
     \end{itemize}
\end{definition}
\begin{definition}[Pieces of Syntax Used in  Intersection Types]
\label{defpieces}
  The translations
  use the $y_j$'s to represent type variables, use the $x_i$'s to build
  type-choice combinators, and use the $z_q^j$'s as
  % in Fω terms: type constructor
  variables restricted to range over type-choice combinators.
  We implement these specialized purposes by using the declarations, restrictions, and terms defined for $i,j\in\mathbb{N}$ and $q\in\mathbb{N}_1$ by:\\
  $
    \begin{array}[b]{@{}l@{\ =\ \;}l@{\quad}l@{}}
        \delta^{y_j}
      & \newdec{y_j^{\Box}}{\diamond}{\ast}
      & %\mbox{where }j \in \mathbb{N}
    \\
        \delta^{x_i}
      & \newdec{x_i^{\Box}}{\diamond}{\ast}
      & %\mbox{where }i \in\mathbb{N}
    \\
            P_{i,q}
      & %P_{i,j,1} =
        \binder{\lambda}{\delta^{x_1}}
          \cdots
            .\lbind{\delta^{x_q}}{x_i}
      & \mbox{where }i\in1..q
        \quad\hfill\mbox{\textit{($P$ for ``projection'')}}
    \\
        \rhoProj{q}
      & \ion\{P_{1,q},\dots,P_{q,q}\}
      & %\mbox{where } q \in\mathbb{N}_1
    \\
        \delta^{z^j_q}
      & \newdec{{z^j_q}^{\Box}}{\rhoProj{q}}{\ast_q}
        &% \mbox{where }  q \in\mathbb{N}_1        \mbox{ and }j\in\mathbb{N} \mbox{ and }  \ast_0 = \ast,  \mbox{ and }
         %\ast_{i+1} = \ast \mbto \ast_i
            \\
        \gamma^{z^j_q}
      & \rdecname(\delta^{z^j_q})
      \quad=\quad
        \rdec{z^j_q}{\rhoProj{q}}
      & %\mbox{where }    q \in\mathbb{N}_1   \mbox{ and }     j \in \mathbb{N}
            \end{array}
     $
\end{definition}
We are ready to define intersection types in the $\n$-cube. We only take the intersection of terms whose degree is 1 and hence whose required sort is $\Box$ (these terms  correspond to types). From definition~\ref{defpieces}, $ \rdec{z^j_q}{\rhoProj{q}}$  is  $ \rdec{z^j_q}{\ion\{P_{1,q},\dots,P_{q,q}\}}$ and hence whichever $P_{i,q}$ is chosen from $\ion\{P_{1,q},\dots,P_{q,q}\}$ for ${z^j_q}$, we get $({z^j_q}\,A_1\,\cdots\,A_q) =_\be A_i$  and we see that $ \pbind{\delta^{z^j_q}}{({z^j_q}\,A_1\,\cdots\,A_q)}$ is the intersection of $A_1,\cdots ,A_q$.
      
\begin{definition}[Intersection Types in the \n-cube]
  Given 
  $\Delta$-terms $A_1$, $\ldots$, $A_q$ where $q\in\mathbb{N}_1$ and $\sharp(A_1)=\dots=\sharp(A_q)=1$, the intersection of $A_1$, $\ldots$,
  $A_q$ is defined as:\\
  $
  % \phantom\quad\phantom\quad{\bigcapdot}\{A_1,\ldots,A_q\}
  \phantom\quad\phantom\quad{\mybigcapdot}\:\{A_1,\ldots,A_q\}
  % \phantom\quad\phantom\quad{\overline{\bigcap}}\{A_1,\ldots,A_q\}
    \quad=\quad
      \pbind{\delta^{z^j_q}}{({z^j_q}\,A_1\,\cdots\,A_q)}
      \quad\mbox{where }{z^j_q}\notin\fv{A_i}\mbox{ for }i\in1..q.
      $
 \end{definition}
\begin{definition}[Syntax Abbreviations for Intersection Types]
  \vspace{-0.1in}
  \begin{itemize}
  \item
    For translations that use only $y_0$ from the $y$'s, let:
 $        y
      = y_0
    $ and $
        \delta^y
        = \delta^{y_0}$.
        \item
  For examples that use only one of the $z$'s at a particular arity $q\in\mathbb{N}_1$, let:
  $        z_q
      = z^0_q
    \qquad\qquad
        \delta^{z_q}
      = \delta^{z^0_q}
    \qquad\qquad
        \gamma^{z_q}
      = \gamma^{z^0_q}.
      $
    \item Let
      $\overline{A} = A \mbto A$ $\hspace{0.3in}$ $ \widetilde{A} = \overline{A} \mbto A$
      $\hspace{0.3in}$
      $\underline{A} = (A \mbto\widetilde{A}) = A \mbto((A \mbto A) \mbto A)$\\
      $\overline{A}^0 = A \mbox{ and } \overline{A}^{i+1} = \overline{\overline{A}^i}$  $\hspace{0.5in}$ $\underline{A}_0 = A \mbox{ and } \underline{A}_{i+1} = \underline{\underline{A}_i}$.
      \end{itemize}
\end{definition}

The following lemma 
sets a type-building toolkit to be used in our examples. Its proof is straightforward using the machinery of PTSs.
\begin{lemma}[Type-Building Toolkit for Examples]
  \label{exampleslem}
  The following hold:
  \vspace{-0.1in}
  \noindent
  \begin{enumerate}%
  \item
    \label{examtype5}
    \label{lem:typing-P}
     For $i,j \in \mathbb{N}$ and $q\in \mathbb{N}_1$,      $\emptyContext  \vdash \ast_i:\Box$;     
     $\emptyContext \vdash P_{i,q} : \ast_q$ for $i\in 1..q$;
           $ \delta^{z_q^j} \vdash z_q^j :\ast_q$.    
  \item
    \label{examtypestart}
        If $\Delta$ is legal and $\newdec{u^{\varsigma}}{\rho}{A}\in\Delta$, then
    $\Delta \vdash u^\varsigma: A : \varsigma$.
  \item
    \label{examtypeappn}
    Let $j \in \mathbb{N}$,
    let $i \in \mathbb{N}_1$,
    let $n\in 1..i$,
    and let $u_1, \ldots, u_n \in \{y_k\mid k\in\mathbb{N}\}$.
    Suppose for $l\in1..n$ that $A_l\in\{u_l,\overline{u_l}\}$.
    Let $\Delta$ be legal such that  $\delta^{z_i^j}\in\Delta$ and  for all $l\in1..n$,  $\delta^{u_l}\in\Delta$.
    %$\delta^{u_1},\ldots,\delta^{u_n},\delta^{z_i^j}\sqsubseteq\Delta$.
        Then
    $\Delta \vdash z_i^j A_1 \ldots A_n :\ast_{i-n}:\Box$.
  \item
    \label{examtypelast}
    If $q \in \mathbb{N}_1$, and $\Delta \vdash A:\ast_q $, and $\Delta \vdash B_i:\ast$ for $i\in[0..q)$, then for
    $j\in[0..q)$ it holds that $\Delta \vdash \newapp{A}{B_0} \cdots\,B_j : \ast_{q-(j+1)}$.
  \item
    \label{examtypelast3p}
    If  $\Delta \vdash A:\ast$ and  $\Delta \vdash B:\ast$ then $\Delta \vdash A \mbto B:\ast$.
    \item
    \label{examtypelast2}
    If  $\Delta \vdash A:\ast$ and $i \in \mathbb{N}$ then $\Delta \vdash \overline{A}^i:\ast$ and  $\Delta \vdash \underline{A}_i:\ast$
    and $\Delta \vdash
    \widetilde{A}:\ast$.
  \end{enumerate}
\end{lemma}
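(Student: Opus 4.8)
The plan is to treat the six parts as routine derivations in the $\n$-cube, carried out in an order that respects their dependencies: first the sort judgement $\emptyContext\vdash\ast_i:\Box$ of part~\ref{examtype5}, then the legal-variable lemma (part~\ref{examtypestart}), then the remaining two claims of part~\ref{examtype5}, then parts~\ref{examtypelast3p}, \ref{examtypelast2}, \ref{examtypelast}, and finally part~\ref{examtypeappn}. There is no circularity, since part~\ref{examtypestart} uses only \mbox{(start)} and \mbox{(weak)} and is independent of the $P_{i,q}$/$z_q^j$ claims. The one conceptual point to keep in view throughout is that the arrow abbreviation $A \mbto B = \bind{\Pi}{\odec{w^\varsigma}{A}}{B}$ always puts the \emph{null} restriction $\diamond$ on its bound variable, so whenever we apply one of these combinators the guard ``if $\rho\neq\diamond$'' in \mbox{(app)} is false and no restriction-satisfaction check is triggered. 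Consequently every application step collapses to ordinary $\la$-cube reasoning, and the only place the FSD machinery is genuinely exercised is the \mbox{(start)} step that introduces $z_q^j$.

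First I would dispatch part~\ref{examtype5}. The judgement $\emptyContext\vdash\ast_i:\Box$ follows by induction on $i$: the base case $\ast_0=\ast$ is \mbox{(axiom)}, and the step types $\ast_{i+1}=\Pi w^\Box{:}\ast.\,\ast_i$ by the $(\Pi)$ rule, whose premises $\emptyContext,w^\Box{:}\ast\vdash\ast_i:\Box$ (induction hypothesis plus weakening) and $\emptyContext\vdash\ast:\Box$ are in hand, using the rule $(\Box,\Box)$ supplied by the running assumption $\{(\Box,\Box),(\Box,\ast)\}\subseteq\Rules$. Then $\emptyContext\vdash P_{i,q}:\ast_q$ is obtained by $q$ applications of the $(\la)$ rule to the abstraction $P_{i,q}$: the body $x_i$ is typed $\ast$ in the context $\delta^{x_1},\dots,\delta^{x_q}$ by part~\ref{examtypestart}, and each $(\la)$ step discharges its $(\Pi)$-formation premise by the $\ast_k{:}\Box$ just proved. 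Finally $\delta^{z_q^j}\vdash z_q^j:\ast_q$ is a single \mbox{(start)} step whose FSD side condition ``$\forall i.\ \emptyContext\vdash P_{i,q}:\ast_q$'' is exactly the claim just established.

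Part~\ref{examtypestart} is the standard legal-variable lemma, proved by induction on the structure of the legal context $\Delta$, using \mbox{(start)} at the point where $\newdec{u^\varsigma}{\rho}{A}$ is introduced (which also yields $\Delta'\vdash A:\varsigma$) and \mbox{(weak)} to transport both judgements to the full $\Delta$; the extra FSD premise of \mbox{(start)} is irrelevant here since it is already satisfied by legality. Part~\ref{examtypelast3p} is a single $(\Pi)$ step: because $\typeAsSort(A)=\ast$ we have $A\mbto B=\Pi w^\ast{:}A.\,B$, and since $(\ast,\ast)\in\Rules$ always, the premises $\Delta,w^\ast{:}A\vdash B:\ast$ (weakening) and $\Delta\vdash A:\ast$ give $\Delta\vdash A\mbto B:\ast$. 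Part~\ref{examtypelast2} then follows by straightforward inductions on $i$, repeatedly feeding the shapes $\overline{A}=A\mbto A$, $\widetilde{A}=\overline{A}\mbto A$, and $\underline{A}=A\mbto\widetilde{A}$ into part~\ref{examtypelast3p}.

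For part~\ref{examtypelast} I would induct on $j$, applying \mbox{(app)} to $A:\ast_q=\Pi w^\Box{:}\ast.\,\ast_{q-1}$ and $B_0:\ast$; because $w\notin\fv{\ast_{q-1}}$ the substituted codomain is just $\ast_{q-1}$, and the $\diamond$ observation means no restriction check arises. Part~\ref{examtypeappn} is then the assembly: part~\ref{examtypestart} gives $z_i^j:\ast_i$, each argument $A_l\in\{u_l,\overline{u_l}\}$ has type $\ast$ (by part~\ref{examtypestart} for $u_l=y_k$ and by part~\ref{examtypelast3p} for $\overline{u_l}$), and the chain of $n$ applications yields $z_i^j A_1\cdots A_n:\ast_{i-n}$ exactly as in part~\ref{examtypelast}, with $\ast_{i-n}:\Box$ supplied by part~\ref{examtype5}. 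There is no deep obstacle here, as the lemma's phrasing suggests; the main thing to be careful about is bookkeeping the FSD side conditions. In particular one must confirm that the restriction $\rhoProj{q}$ attached to $z_q^j$ never has to be \emph{satisfied} in these derivations --- it would be activated via $\rdecname(\Delta)\Vdash$ only when $z_q^j$ occurs as an \emph{argument}, which never happens in this toolkit --- so that all applications legitimately reduce to ordinary $\la$-cube reasoning.
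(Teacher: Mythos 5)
Your proposal is correct and matches the paper's approach: the paper itself offers no detailed argument, saying only that the lemma's proof ``is straightforward using the machinery of PTSs,'' and your derivations are exactly that routine PTS reasoning, organized with the right key observation that the $\mbto$ abbreviation carries the null restriction $\diamond$ so no FSD side condition is ever triggered except at the \mbox{(start)} step introducing $z_q^j$, where the premises $\emptyContext\vdash P_{i,q}:\ast_q$ discharge it. Nothing further is needed.
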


\subsection{Simple Examples}
\label{sec:simple-examples}

\begin{definition}[Needed Terms and Declarations] Define the following:
  $
    \begin{array}[b]{@{}l@{\;{=}\;}l@{\qquad}l@{\;{=}\;}l@{\qquad}l@{\;{=}\;}l@{}}
        U        & \lappThree{z_2}{y}{\overline{y}}
      & \delta^u & \odec{u}{U} \mbox{ where $u\not = z_2$ and  $u\not = y$}
      & V        & \lappThree{z_2}{\,\overline{y}}{\,\overline{y}^2}
    \\
        W'       & (\lbind{\delta^{z_2}}{\lbind{\delta^u}{u}})
      & W        & (\pbind{\delta^{z_2}}{\pbind{\delta^u}{U}})
                   = (\pbind{\delta^{z_2}}{(U\mbto U)})
      & \delta^w & \odec{w}{W}
    \end{array}
      $%
  \end{definition}

\begin{example}[Derivation Simulating Polymorphic Identity with Intersection Types]
  \label{derexamplely}
  %  \noindent
  All the  derivations in this example follow from Lemma~\ref{exampleslem} and the typing rules.
  \vspace{-0.1in}
  \begin{itemize}
    \item[1.]
      ${\Delta^{y,z_2} \vdash U : \ast}$ \hfill{by Lemma~\ref{exampleslem}.}
     \item[2.]
      ${\Delta^{y,z_2,u} \vdash u : U}$ \hfill{by Lemma~\ref{exampleslem}.}
     \item[3.]
     $ {\Delta^{y,z_2} \vdash (\lbind{\delta^u}{u}) : (\pbind{\delta^u}{U}) : \ast}$ \hfill{by $(\la)$ and (app).}
     \item[4.]
       ${\delta^y       \vdash W' : W : \ast}$ \hfill{by $(\la)$ and (app).}
         \end{itemize}
  We have a polymorphic identity function $W'$, but the type $W=\pbind{\delta^{z_2}}{(\pbind{\delta^u}{U})}=\pbind{\delta^{z_2}}{(U\mbto U)}$ doesn't look
  like an intersection type.
  Let's see if we can reach something that looks more like an intersection type instead.
   \vspace{-0.1in}
  \begin{itemize}
   \item[5.]
      ${\Delta^{y,z_2}\mbox{ is legal}}$
 \item[6.]
      ${\Delta^{y,z_2} \vdash y:\ast}$ and 
 %         \item[7.]
           ${\Delta^{y,z_2} \vdash \overline{y}:\ast}$ and
 %         \item[8.]  
      $      {\Delta^{y,z_2} \vdash \overline{y}^2:\ast}$ \hfill{by Lemma~\ref{exampleslem}.}
          \item[7.]
     $      {\Delta^{y,z_2} \vdash z_2:\ast_2}$ and 
     %    \item[10.]
           ${\Delta^{y,z_2} \vdash V : \ast}$ \hfill{by Lemma~\ref{exampleslem}.}
         \item[8.]
         $ {\binder{\forall}{i\mbin1..2}
         \:
              \subst{(\pbind{\delta^u}{U})}{z_2}{P_{i,2}}
            =
                    ({P_{i,2}}\,{y}\,{\overline{y}})
              \mbto ({P_{i,2}}\,{y}\,{\overline{y}})
            %\\\hbox{}\qquad\qquad\qquad
            =_\beta
              \overline{y}^2
            =_\beta}$\\
              $\phantom\quad\quad\quad\quad\quad\quad\quad\quad\quad\quad\quad\quad\quad\quad\quad{({P_{i,2}}\,{\overline{y}}\,{\overline{y}^2})
            =
            \subst{V}{z_2}{P_{i,2}}}$
                   \item[9.]
        $      {\bind{\forall}{i\mbin1..2}
                     {\:\emptyContext \Vdash (\pbind{\delta^u}{U})[z_2:=P_{i,2}]\,\ion \{V[z_2:=P_{i,2}]\}}}$ \hfill \mbox{by (ref) of Figure~\ref{fig:restrsat}.}
          \item[10.]
           ${\gamma^{z_2} \Vdash \pbind{\delta^u}{U}\,\ion\{V\}}$ \hfill \mbox{by {(ctR)} of Figure~\ref{fig:restrsat}.}
      \item[11.]
     $ {\Delta^{y,z_2} \vdash (\lbind{\delta^u}{u}) : V : \ast}$ \hfill{by 3., 10., and (conv).}
      \item[12.]
        ${\delta^y \vdash W' : (\pbind{\delta^{z_2}}{V}) : \ast}$  \hfill{by  11., and ($\la$).}
 \end{itemize}
   The result type here looks better:
   % $\pbind{\delta^{z_2}}{V}=\pbind{\delta^{z_2}}{(\appFive{z_2}{}{\overline{y}}{}{\overline{y}^2})}=\bigcapdot\{{\overline{y}},{\overline{y}^2}\}$.
   $\pbind{\delta^{z_2}}{V}=\pbind{\delta^{z_2}}{(\appFive{z_2}{}{\overline{y}}{}{\overline{y}^2})}=\mybigcapdot\{{\overline{y}},{\overline{y}^2}\}$.
    %$\pbind{\delta^{z_2}}{V}=\pbind{\delta^{z_2}}{(\appFive{z_2}{}{\overline{y}}{}{\overline{y}^2})}=\overline{\bigcap}\{{\overline{y}},{\overline{y}^2}\}$.
  It is more obvious that one can simply choose either $\overline{y}$ or $\overline{y}^2$, just like with the intersection type
  $\overline{y}\cap\overline{y}^2$.
  So what would that choice look like?
  Instantiating $\pbind{\delta^{z_2}}{V}$ to either $\overline{y}$ or $\overline{y}^2$ goes like this:

\begin{itemize}
  \item[13.]
    $      {\delta^{y}\mbox{ is legal (as a context)}}$
    \item[14.]
      ${\bind{\forall}{i\mbin1..2}
         {\:\delta^y \vdash P_{i,2} : \ast_2 :\Box}}$ \hfill{by Lemma~\ref{exampleslem} and (weak).}
          \item[15.]
     $      {\bind{\forall}{i\mbin1..2}
         {\:\varepsilon \Vdash P_{i,2}\,\ion\{P_{1,2},P_{2,2}\}}}$ \hfill \mbox{by (ref) of Figure~\ref{fig:restrsat}.}
          \item[16.]
     $      {\bind{\forall}{i\mbin1..2}
         {\:\rdecname(\delta^y) \Vdash P_{i,2}\,\ion\{P_{1,2},P_{2,2}\}}}$ \hfill \mbox{by (ctR) of Figure~\ref{fig:restrsat}.}
         \item[17.]
     $      {\bind{\forall}{i\mbin1..2}
         {\: \delta^y \vdash (\lbind{\delta^u}{u})[z_2 := P_{i,2}] : V[z_2 :=P_{i,2}] : \ast}}$ \hfill{by 11., 15., substitution.}
          \item[18.]
      $      {\binder{\forall}{i\mbin1..2}
         \:
           \subst{(\lbind{\delta^u}{u})}{z_2}{P_{i,2}}
                  = (\lbind{\odec{u}
                        {({P_{i,2}}\,{y}\,{\overline{y}})}}
                  {u})
         \rightarrow_\be
           (\lbind{\odec{u}{\overline{y}^{i-1}}}{u})}$
       \item[19.]  
      ${\binder{\forall}{i\mbin1..2}
         \:
           \subst{V}{z_2}{P_{i,2}}
         = ({P_{i,2}}\,{\overline{y}}\,{\overline{y}^2})
         \rightarrow_\be
           \overline{y}^i}$
         \item[20.]
      $      {\binder{\forall}{i\mbin1..2}
         {\: \delta^y \vdash (\lbind{\odec{u}{\overline{y}^{i-1}}}{u}) : {\overline{y}^i} : \ast}}$ \hfill{by 17., 18., 19., and subject reduction.}
   \end{itemize}
  So both 4., and 12., above are roughly like $(\lbind{u}{u}):\overline{y}\cap\overline{y}^2$ with intersection types and the instantiations are
  like $(\lbind{u}{u}):\overline{y}$ and $(\lbind{u}{u}):\overline{y}^2$.
\end{example}

\begin{example}[Derivation for $\lapp{(\lbind{w}{\lapp{w}{w}})}{(\lbind{u}{u})}$ in Intersection Types Style] \label{derexamplelyint}
  \noindent
  We make use of 1., 4., 14., and 15. of example~\ref{derexamplely} in the following.
  \begin{itemize}
   %$\begin{array}{lll}
   \item[21.]           ${\delta^{y} \vdash W : \ast : \Box}$
     \hfill{by 1., of example~\ref{derexamplely}.}
     \item[22.]
      ${\Delta^{y,w} \vdash w : W : \ast}$
     \hfill{by 21., \&~(start).}
      \item[23.]
          ${\Delta^{y,w}\mbox{ is legal}}$
      \hfill{by 22., \& definition~\ref{defsystems}.}
   \item[24.]
      ${\bind{\forall}{i\mbin1..2}
       {\:\Delta^{y,w} \vdash P_{i,2} : \ast_2 :\Box}}$
     \hfill{by 23.,  \& 14., of example~\ref{derexamplely}}
   \item[25.]
      ${\bind{\forall}{i\mbin1..2}
        {\:\rdecname(\Delta^{y,w}) \Vdash P_{i,2}\,\ion\{P_{1,2},P_{2,2}\}}}$
      \hfill{by 16., of example~\ref{derexamplely}.}
 \item[26.]
      ${\bind\forall{i\in 1..2}
         {\:
            \Delta^{y,w}
          \vdash
            \newapp{w}{P_{i,2}}
          :
            (\Pi\delta^u.U)[z_2 := P_{i,2}]
          :
            \ast
          }}$
      \hfill{by 22., 24., 25., \&~(app).}
     \item[27.]
      ${\binder\forall{i\in 1..2}\:
          \Delta^{y,w}
        \vdash
          \newapp{w}{P_{i,2}}
        : (\pbind{\odec{u}
                       {({P_{i,2}}\,{y}\,{\overline{y}})}}
                 {({P_{i,2}}\,{y}\,{\overline{y}})})
        : \ast
      }$
       \hfill{by 26.}
       \item[28.]
      ${\Delta^{y,w} \vdash \newapp{w}{P_{1,2}} : \overline{y} : \ast}$
      \hfill{by 27.}
    \item[29.]
      ${
        \Delta^{y,w} \vdash \newapp{w}{P_{2,2}} : {\overline{y}}\mbto{\overline{y}} : \ast}$
     \hfill{by 27.}
\item[30.]
      ${
          \Delta^{y,w}
        \vdash
          (\lapp{\newapp{w}{P_{2,2}}}
                {(\newapp{w}{P_{1,2}})})
        : \overline{y}
        : \ast}$
    \hfill{by 28., 29., \&~(app)}
    \item[31.]
      ${
          \delta^y
        \vdash
          (\lbind{\delta^{w}}
                 {\lapp{\newapp{w}{P_{2,2}}}
                       {(\newapp{w}{P_{1,2}})}})
        : W\mbto\overline{y}
        : \ast}$
     \hfill{by 30., \&~$(\la)$}
    \item[32.]
      ${
         \delta^y
       \vdash
         \lapp{(\lbind{\delta^{w}}
                      {\lapp{\newapp{w}{P_{2,2}}}
                            {(\newapp{w}{P_{1,2}})}})}
              {W'}
       : \overline{y} : \ast}$
     \hfill{by 4., of example~\ref{derexamplely}, 31.}
        \end{itemize}
  This is the equivalent of typing $\lapp{(\lbind{w}{\lapp ww})}{(\lbind{u}{u})}$ with intersection types.
\end{example}

\subsection{Typing Urzyczyn's Untypable Term}
\label{sec:typing-urzyczyn}

Urzyczyn~\cite{Urzyczyn:MSCS-1997} proved $\mycrown{U}=(\la r.\,h(r(\la f \la s.\,f\,s))(r(\la q.\la g.\,g\,q)))(\la o.\,o\,o\,o)$ is
untypable in \textsf{$F_\omega$}.
\cite{GHRDR1993} proved every pure $\la$-term is typable in \textsf{$F_\omega$} iff it is typable in the $\lambda$-cube. Hence $\mycrown{U}$ is untypable in the $\lambda$-cube.
This section types $\mycrown{U}$ in the $\n$-cube by using finite-set declarations.

\begin{definition}[Terms of Type $\ast$ and Sort $\Box$ for Urzyczyn's Term]
  \label{fseqetcdef}
  $$
    \begin{array}[b]{ll}
        F = {z_3}\,{\overline{y}^3}\,{\overline{y}^2}\,{\overline{y}} \hspace{0.8in}
      & Q = {z_3}\,{\underline{y}}\,{y}\,{\underline{y}}
    \\[1ex]
        S = {z_3}\,{\overline{y}^2}\,{\overline{y}^1}\,{y}
      & G = {z_3}\,{\underline{y}_2}\,{\overline{y}}\,{\overline{(\underline{y})}}
    \\[1ex]
      & M = {z_3}\,{\widetilde{(\underline{y})}}\,{y}\,{\underline{y}}
    \\[0.5ex]
        B =  F \mbto S \mbto S
      & A =  Q \mbto G \mbto M
    \\[0.5ex]
        E_1 = \overline{y}^4
      & D_1 = \underline{y}\mbto\underline{y}_2\mbto\widetilde{(\underline{y})}
    \\[0.5ex]
        E_2 = \overline{y}^3
      & D_2 = \underline{y}
    \\[0.5ex]
        E_3 = \overline{y}^2
      & D_3 = \underline{y}_2
    \\[1ex]
        E = \pbind{\delta^{z_3}}{B}
      & D =   \Pi\delta^{z_3}. A
    \\[0.5ex]
        C_1 = \overline{y}^2
      &
        R' = O \mbto C
    \\[0.5ex]
        C_2 = \widetilde{(\underline{y})}
      &
        R_1 = E \mbto C_1
    \\[0.5ex]
        C = {z_2}\,{C_1}\,{C_2}
      &
        R_2 = D \mbto C_2
    \\[0.5ex]
        O = {z_2}\,{E}\,{D}
      &
        R =  \Pi\delta^{z_2}. R'
    \end{array}
  $$
\end{definition}
The proof of the following lemma is straightforward from the typing rules.
\begin{lemma}
\label{typinguntypablelem}
Let $H \in \{F, S, B, Q, G, M, A\}$ and $J \in\{E, D,R\}$.  The following hold:
$\delta^y,\delta^{z_3}\vdash H:\ast$ and $\delta^y\vdash J:\ast$ and $\delta^y,\delta^{z_2}\vdash C:\ast$ and $\delta^y,\delta^{z_2}\vdash R':\ast$.
\end{lemma}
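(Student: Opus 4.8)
The plan is to reduce every claim to the type-building toolkit of Lemma~\ref{exampleslem}, applied in a carefully staged order dictated by the mutual dependencies among the terms of Definition~\ref{fseqetcdef}. First I would record that both contexts $\delta^y,\delta^{z_3}$ and $\delta^y,\delta^{z_2}$ are legal: starting from the axiom $\emptyContext\vdash\ast:\Box$, rule (start) introduces $\delta^y$ (since $\emptyContext\vdash\ast:\Box$ and $\rho=\diamond$), and then introduces $\delta^{z_q}$ for $q\in\{2,3\}$ because Lemma~\ref{exampleslem}(\ref{examtype5}) supplies $\emptyContext\vdash\ast_q:\Box$ and $\emptyContext\vdash P_{i,q}:\ast_q$ for each $i\in1..q$, which (after weakening) discharges both the type premise $\delta^y\vdash\ast_q:\Box$ and the restriction premises of (start) for $\rho=\rhoProj{q}$. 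With legality established, Lemma~\ref{exampleslem}(\ref{examtypestart}) yields $\delta^y,\delta^{z_q}\vdash z_q:\ast_q$ and $\delta^y,\delta^{z_q}\vdash y:\ast$.

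Next I would type the members of $\{F,S,Q,G,M\}$, each of which is $z_3$ applied to three arguments. For every such argument I would first show it has type $\ast$ using Lemma~\ref{exampleslem}(\ref{examtypelast2}): starting from $\delta^y,\delta^{z_3}\vdash y:\ast$ it yields $\overline{y}^i:\ast$ and $\underline{y}_i:\ast$ directly, and a second application to $\underline{y}:\ast$ yields $\widetilde{(\underline{y})}:\ast$ and $\overline{(\underline{y})}:\ast$. Then Lemma~\ref{exampleslem}(\ref{examtypelast}) with $q=3$, $j=2$ collapses $z_3$ applied to its three $\ast$-typed arguments to type $\ast_{3-3}=\ast$, giving $\delta^y,\delta^{z_3}\vdash H:\ast$ for $H\in\{F,S,Q,G,M\}$. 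The compound types $B=F\mbto S\mbto S$ and $A=Q\mbto G\mbto M$ then follow by two applications of Lemma~\ref{exampleslem}(\ref{examtypelast3p}). For the $\Pi$-types I would use rule $(\Pi)$: $E=\pbind{\delta^{z_3}}{B}$ and $D=\pbind{\delta^{z_3}}{A}$ come from the body typings $\delta^y,\delta^{z_3}\vdash B:\ast$ and $\delta^y,\delta^{z_3}\vdash A:\ast$, together with $\delta^y\vdash\ast_3:\Box$ (Lemma~\ref{exampleslem}(\ref{examtype5}) and weakening) and $(\Box,\ast)\in\Rules$, so $\delta^y\vdash E:\ast$ and $\delta^y\vdash D:\ast$. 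The term $C=z_2\,C_1\,C_2$ is handled exactly like the $H$'s but with $q=2$ (its arguments $C_1=\overline{y}^2$ and $C_2=\widetilde{(\underline{y})}$ are $\ast$-typed by Lemma~\ref{exampleslem}(\ref{examtypelast2})), yielding $\delta^y,\delta^{z_2}\vdash C:\ast$. Finally, weakening $E,D$ into $\delta^y,\delta^{z_2}$ and applying Lemma~\ref{exampleslem}(\ref{examtypelast}) at $q=2$ types $O=z_2\,E\,D:\ast$; then $R'=O\mbto C:\ast$ by Lemma~\ref{exampleslem}(\ref{examtypelast3p}), and $R=\pbind{\delta^{z_2}}{R'}$ follows by rule $(\Pi)$ (using $\delta^y\vdash\ast_2:\Box$ and $(\Box,\ast)\in\Rules$), giving $\delta^y\vdash R:\ast$.

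The proof has no deep obstacle, but two bookkeeping points deserve care. First, the most convenient application clause of the toolkit, Lemma~\ref{exampleslem}(\ref{examtypeappn}), only covers arguments drawn from $\{u_l,\overline{u_l}\}$ and so does \emph{not} directly apply to the iterated arguments $\overline{y}^3$, $\overline{y}^2$, $\underline{y}_2$, $\widetilde{(\underline{y})}$, $\overline{(\underline{y})}$ occurring here; the fix is to route every application through Lemma~\ref{exampleslem}(\ref{examtypelast}), which permits arbitrary $\ast$-typed arguments, after first establishing each argument's type with Lemma~\ref{exampleslem}(\ref{examtypelast2}). Second, the four claims are mutually dependent and must be proved in the staged order above, since $E$ and $D$ are needed to form $O$ (hence $R'$ and $R$), and each of $E,D$ in turn rests on the previously established typings of $B,A$; the only subtlety is the harmless weakening of $\delta^y\vdash E:\ast$ and $\delta^y\vdash D:\ast$ into the larger legal context $\delta^y,\delta^{z_2}$ before forming $O$. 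Because every restriction $\rho=\diamond$ arising from the $\mbto$-encoded arrows is unrestricted, no (app) restriction-satisfaction premise from Figure~\ref{fig:restrsat} is triggered in these derivations, so the argument never leaves the realm already handled by Lemma~\ref{exampleslem}.
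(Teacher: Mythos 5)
Your proposal is correct and follows exactly the route the paper intends: the paper itself gives no details, stating only that the lemma ``is straightforward from the typing rules,'' and your staged reduction to Lemma~\ref{exampleslem} (typing the arguments via parts~\ref{examtypelast2} and~\ref{examtypelast3p}, collapsing the $z_q$-applications via part~\ref{examtypelast}, and closing with the $(\Pi)$ rule under $(\Box,\ast)\in\Rules$) is precisely the elaboration that toolkit was set up to provide. Your two bookkeeping remarks --- that part~\ref{examtypeappn} does not cover the iterated arguments and that $E,D$ must be weakened into $\delta^y,\delta^{z_2}$ before forming $O$ --- are accurate and are the only points where care is genuinely needed.
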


\begin{example}[Viewing $E$, $D$, and $R$ as Intersection Types]
\label{viewingedrasinter}
\noindent
$$
  \begin{array}{|llllllllll|}
    \hline
     E                    & = & \Pi\delta^{z_3}. & F             & \mbto & S                          & \mbto & S                           &                   &
  \\
     \nf(B[z_3 :=P_{1,3}]) & = &                  &\overline{y}^3 & \mbto & \overline{y}^2             & \mbto & \overline{y}^2              & = \overline{y}^4  & = E_1
  \\
     \nf(B[z_3 :=P_{2,3}]) & = &                  &\overline{y}^2 & \mbto & \overline{y}^1             & \mbto & \overline{y}^1              & = \overline{y}^3  & = E_2
  \\
     \nf(B[z_3 :=P_{3,3}]) & = &                  &\overline{y}   & \mbto & y                          & \mbto & y                           & = \overline{y}^2  & = E_3
  \\
  \hline
                          &   &                  &               &                  &                            &                  &                             &                   &
  \\[-2ex]
     D                    & = & \Pi\delta^{z_3}. & Q             & \mbto & G                          & \mbto & M                           &                   &
  \\
     \nf(A[z_3 :=P_{1,3}]) & = &                  &\underline{y}  & \mbto & \underline{y}_2            & \mbto & \widetilde{(\underline{y})} &                   & = D_1
  \\
     \nf(A[z_3 :=P_{2,3}]) & = &                  &{y}            & \mbto & \overline{y}               & \mbto & y                           & = \underline{y}   & = D_2
  \\
     \nf(A[z_3 :=P_{3,3}]) & = &                  &\underline{y}  & \mbto & \overline{(\underline{y})} & \mbto & \underline{y}               & = \underline{y}_2 & = D_3
  \\[1ex]
  \hline
                          &   &                  &               &                  &                            &                  &                             &                   &
  \\[-2ex]
     R                    & = &  \Pi\delta^{z_2}.& O             & \mbto & C                          &                  &                             &                   &
  \\
     \nf(R'[z_2 :=P_{1,2}])& = &                  & E             & \mbto & C_1                        &                  &                             &                   & = R_1
  \\
     \nf(R'[z_2 :=P_{2,2}])& = &                  & D             & \mbto & C_2                        &                  &                             &                   & = R_2
  \\[1ex]
  \hline
  \end{array}
$$
Consider the type $E$ and its component types $F$ and $S$ which we list in separate columns in the table above.  Both $F$ and $S$ act like
tuples of 3 types.  The restriction in the declaration of $z_3$ forces whatever replaces $z_3$ to simply pick one of the three types.  By
looking at the above table, we see that the column with $F$ at the top lists the components of $F$ in the three rows below, and the columns
with $S$ at the top work similarly.  The first row lists $E$ and the three rows below list the results of the three possible instantiations
of $E$.
In effect, $E$ works like the intersection type
\(
    \overline{y}^4 \cap \overline{y}^3 \cap \overline{y}^2
  = E_1            \cap E_2            \cap E_3
\).

The same argument shows that the type $D$ works like the intersection type $D_1\cap{D_2}\cap{D_3}$ and that the type $R$ works like the
intersection type $R_1\cap{R_2}$.
\end{example}

\begin{definition}[Declarations for Urzyczyn's Term]
  %Let the following names for declarations be defined:
 Let  $
    %\begin{array}{@{}l@{\;{=}\;}l@{}}
        \delta^h
    %  &
      = \odec{h}{(C_1\mbto C_2\mbto y)}$;
    %\\ \quad
       $ \delta^r
    %  &
      = \odec{r}{R};
    %\\
    \quad
        \delta^o
    %  &
      = \odec{o}{O};
    %\\
    \quad
        \delta^f
    %  &
      = \odec{f}{F};
    %\\
    \quad
        \delta^s
    %  &
      = \odec{s}{S};
    %\\
    \quad
        \delta^q
    %  &
      = \odec{q}{Q};
    %\\
    \quad$ and 
        $\delta^g
    %  &
      = \odec{g}{G}.
    %\end{array}
  $%
 \end{definition}

\begin{definition}[Terms of Sort $\ast$ for Urzyczyn's Term] Let \\
  $
    \begin{array}[b]{@{}l@{}}
        T
      =
        \lbind{\delta^{z_3}}{\lbind{\delta^f}{\lbind{\delta^s}{\lapp{f}{s}}}}
    \ \qquad
        J
      =
        \lbind{\delta^{z_3}}{\lbind{\delta^q}{\lbind{\delta^g}{\lapp{g}{q}}}}
    \ \qquad
        L
      =
        \lappThree
          {h}
          {(\lappFour
              {r}
              {}{P_{1,2}}
              {T})}
          {(\lappFour
              {r}
              {}{P_{2,2}}
              {J})}
    \\
        V
      =
        \lbind{\delta^{z_2}}
          {\lbind{\delta^o}
            {\lappFive
               {o}
               {}{P_{1,3}}
               {(\app{o}{}{P_{2,3}})}
               {(\app{o}{}{P_{3,3}})}}}
    \quad\qquad
        U
      =
        \lapp{(\lbind{\delta^r}{L})}{V}
    \end{array}
  $
\end{definition}

Again the following lemma is straightforward according to the typing rules.
\begin{lemma}
\label{typeurz}
%Assume $\{(\Box,\ast), (\Box,\Box)\} \subseteq \Rules$.
The following hold:
\begin{enumerate}
\item
\label{typeurzone}
\begin{enumerate}
\item $\gamma^{z_3} \Vdash F\,\ion\{S\mbto S\}$
\item $\gamma^{z_3} \Vdash G\,\ion\{Q\mbto M\}$,
\item
  $\gamma^{z_2}
  \Vdash O\,\ion\{\pbind{\delta^{z_3}}{(\appFive{z_2}{}{B}{}{A})}\}$%
\end{enumerate}
\item
\label{typeurztwo}
$\delta^y\vdash T :E:\ast$.
\item
\label{typeurzthree}
$\delta^y\vdash J :D:\ast$.
\item
  \label{typeurz4}
  \begin{enumerate}
  \item
    \label{typeurz4a}
    \(
        \Delta^{y,{z_2},o}
      \vdash
        (\app{o}{}{P_{i,3}})
      : ({
        \appFive
          {z_2}
          {}{E_i}
          {}{D_i}})
      : \ast
    \)
    \relax
    for $i\in\{1,2,3\}$.
  \item
    \label{typeurz4d}
    $\Delta^{y, {z_2}, o}\vdash (o\, P_{1,3})(o\, P_{2,3})(o\, P_{3,3}):C:\ast$.
  \item
    \label{typeurz4e}
    $\delta^y\vdash V : R :\ast$.
  \end{enumerate}
\item
  \label{typeurz5}
  \begin{enumerate}
  \item
    \label{typeurz5a}
    $\Delta^{y,h,r} \vdash (\app{r}{}{P_{i,2}}) : R_i : \ast$ for $i\in\{1,2\}$.
  \item
    \label{typeurz5c}
    $\Delta^{y,h,r} \vdash (\appFive{r}{}{P_{1,2}}{}{T}) : C_1 : \ast$.
  \item
    \label{typeurz5d}
    $\Delta^{y,h,r} \vdash (\appFive{r}{}{P_{2,2}}{}{J}) : C_2 : \ast$.
  \end{enumerate}
\item
\label{typeurz6}
$\Delta^{y,h,r} \vdash L:y:\ast$ and $\Delta^{y,h} \vdash \la\delta^r. L : \Pi \delta^r. y :\ast$.
\item
\label{typeurz7}
$\Delta^{y,h} \vdash U : y : \ast$.
\end{enumerate}
\end{lemma}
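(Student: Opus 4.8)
The plan is to construct, from the bottom up, a single derivation ending in part~\ref{typeurz7}, reading off the intermediate parts as the derivation is built. Three earlier results discharge essentially all side conditions: Lemma~\ref{typinguntypablelem} gives that every type in sight inhabits $\ast$ (needed by the premises of $(\Pi)$, $(\la)$ and \mbox{(conv)}), Lemma~\ref{exampleslem} supplies the typings of the projections $P_{i,q}$ and of the $z$'s, and the reduction table of Example~\ref{viewingedrasinter} records the normal forms of $B[z_3:=P_{i,3}]$, $A[z_3:=P_{i,3}]$ and $R'[z_2:=P_{i,2}]$ that drive the conversions. The single non-routine ingredient is the $\n$-cube \mbox{(conv)} rule combined with the restriction-satisfaction judgements of Figure~\ref{fig:restrsat}: under an FSD such as $z_3\ion\{P_{1,3},P_{2,3},P_{3,3}\}$, a type that is an application headed by $z_3$ can be converted to whatever it becomes at each admissible projection, discharging all instantiations at once.

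First I would prove part~\ref{typeurzone}. Each of its three judgements follows from one use of \mbox{(ctR)} followed, for every admissible projection, by \mbox{(ref)}; the hypothesis of \mbox{(ref)} is exactly one of the $=_\be$ equalities readable from Example~\ref{viewingedrasinter}. For $\gamma^{z_3}\Vdash F\ion\{S\mbto S\}$ this amounts to checking $F[z_3:=P_{i,3}]=_\be(S\mbto S)[z_3:=P_{i,3}]$ for $i\in 1..3$; the cases for $G$ and for $O$ are analogous, the last one comparing terms under a $\Pi\delta^{z_3}$ binder. With these in hand I would type $T$ and $J$ (parts~\ref{typeurztwo},~\ref{typeurzthree}). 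The only obstacle is that the declared type $F$ of $f$ is a $z_3$-headed application, so $f\,s$ cannot be formed by \mbox{(app)} directly; here \mbox{(conv)}, justified by clause~(a) of part~\ref{typeurzone} and by $S\mbto S:\ast$ from Lemmas~\ref{typinguntypablelem} and~\ref{exampleslem}, retypes $f$ at $S\mbto S$, so that $f\,s:S$, after which three uses of $(\la)$ give $T:E$. The term $J$ is identical, using clause~(b).

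The heart of the argument is part~\ref{typeurz4}, which types $V$ and, in particular, the threefold self-application. Clause~(c) of part~\ref{typeurzone} lets \mbox{(conv)} read $o:O$ as $o:\pbind{\delta^{z_3}}{(z_2\,B\,A)}$, so each $o\,P_{i,3}$ is a legal instantiation of type $z_2\,E_i\,D_i$ once $B[z_3:=P_{i,3}]$ and $A[z_3:=P_{i,3}]$ are converted to $E_i$ and $D_i$ (part~\ref{typeurz4a}). The decisive step is part~\ref{typeurz4d}: applying these three terms to one another works because, under the FSD on $z_2$, a type $z_2\,(A_1\mbto B_1)\,(A_2\mbto B_2)$ converts to $(z_2\,A_1\,A_2)\mbto(z_2\,B_1\,B_2)$ — the equality holding at both $P_{1,2}$ and $P_{2,2}$ — so each $o\,P_{i,3}$ can be split to expose the next one as its domain, chaining down to the target $C=z_2\,C_1\,C_2$. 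Two uses of $(\la)$ then yield $V:R$ (part~\ref{typeurz4e}).

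Finally I would assemble the outer redex. Instantiating $r:R$ at $P_{1,2}$ and $P_{2,2}$ and converting $R'[z_2:=P_{i,2}]$ to $R_i$ gives $r\,P_{i,2}:R_i$ (part~\ref{typeurz5a}); feeding in the already-typed $T:E$ and $J:D$, weakened into the larger context, yields $r\,P_{1,2}\,T:C_1$ and $r\,P_{2,2}\,J:C_2$ (parts~\ref{typeurz5c},~\ref{typeurz5d}); applying $h:C_1\mbto C_2\mbto y$ gives $L:y$, and $(\la)$ followed by \mbox{(app)} against $V:R$ gives $U:y$ (parts~\ref{typeurz6},~\ref{typeurz7}). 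The hard part is part~\ref{typeurz4d}: arranging the three conversions so that the arguments are peeled in the correct order while each step remains simultaneously correct at both instantiations of $z_2$ is precisely the work the classical ($\cap$-Intro) rule would otherwise do with independent subderivations, and it must be set up so that every intermediate converted type is itself $\ast$-inhabited to meet the premises of \mbox{(conv)}.
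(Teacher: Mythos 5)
Your proposal is correct and follows the same route the paper intends: the paper offers no proof beyond the remark that the lemma is ``straightforward according to the typing rules,'' and your bottom-up derivation---discharging each restriction-satisfaction judgement of part~\ref{typeurzone} by one use of \mbox{(ctR)} followed by \mbox{(ref)} against the $\be$-equalities tabulated in Example~\ref{viewingedrasinter}, then using \mbox{(conv)} to retype $f$, $g$, $o$ and $r$ at the $\Pi$-types their $z$-headed declared types convert to---is exactly the verification that remark leaves to the reader. Your identification of part~\ref{typeurz4d} as the crux, where $z_2\,(A_1\mbto B_1)\,(A_2\mbto B_2)$ is converted to $(z_2\,A_1\,A_2)\mbto(z_2\,B_1\,B_2)$ simultaneously at both projections, matches the mechanism the paper highlights in Section~2 as replacing the ($\cap$-Intro) rule.
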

Now we show that Urzyczyn's famous term is typable in the $\n$-cube.
\begin{example}[Urzyczyn's Term Is Typable]
  Clearly, Urzyczyn's term $\mycrown{U}=\TE(U)$.
  Since Lemma~\ref{typeurz}.\ref{typeurz7} shows that
  $\Delta^{y,h} \vdash U : y : \ast$, then by Definition~\ref{typabilitydef}, $\mycrown{U}$ is typable.
\end{example}

\section{Conclusion}
In this paper we introduced an extension of the PTS $\la$-cube using finite set
declarations that allow us to translate intersection types as $\la$-terms. We gave
the translation of Urzyczyn's famous term $U$ (which is untypable in the $\la$-cube)
in the $\n$-cube and showed that this term is indeed typable in the $\n$-cube. The
set up and machinery presented in this paper can be followed to prove that the
$\n$-cube characterizes all strongly normalising terms.

\end{document}